\documentclass[lettersize,journal]{IEEEtran}
\usepackage{mathptmx} 
\usepackage{footnote}
\makesavenoteenv{tabular}
\makesavenoteenv{table}
\usepackage{pgfplots}

\usepackage{bbm}
\usepackage{booktabs}
\usepackage{xcolor}
\usepackage[hidelinks, hyperfootnotes=true]{hyperref}
\usepackage{nomencl}
\usepgfplotslibrary{fillbetween}
\definecolor{lightgreen}{rgb}{0.56, 0.93, 0.56} 
\usepackage{hyperref}
\usepackage{amsmath,amsfonts}
\usepackage{amsthm}
\newtheorem{theorem}{Theorem}
\newtheorem{proposition}{Proposition}

\usepackage{algorithm}
\usepackage{enumitem}
\usepackage{algpseudocode}
\usepackage{threeparttable}
\usepackage{array}
\usepackage[caption=false,font=footnotesize]{subfig} 
\usepackage{textcomp}
\usepackage{stfloats}
\usepackage{url}
\usepackage{verbatim}
\usepackage{graphicx}
\usepackage{cite}
\usepackage{color, colortbl} 
\usepackage{etoolbox}
\usepackage{nomencl}
\makenomenclature
\renewcommand\nomgroup[1]{%
  \item[\itshape
  \ifstrequal{#1}{R}{Residential Load Profiles:}{%
  \ifstrequal{#1}{F}{Full Convolutional Profile Flow:}{%
  \ifstrequal{#1}{E}{Evaluation Metrics:}{}}}%
]}

\begin{document}


\title{Estimating Density Functions for Probabilistic Power Flow Using Invertible Neural Networks}

\author{Weijie Xia,~\IEEEmembership{Student Member,~IEEE}, James Ciyu Qin,~\IEEEmembership{Member,~IEEE},  Edgar Mauricio Salazar Duque,~\IEEEmembership{Student Member,~IEEE}, Hongjin, Du,~\IEEEmembership{Student Member,~IEEE}, Peter Palensky,~\IEEEmembership{Senior Member,~IEEE}, Giovanni Sansavini,~\IEEEmembership{Member,~IEEE}, Pedro P. Vergara,~\IEEEmembership{Senior Member,~IEEE}
\thanks{This research was supported by the Align4Energy Project (NWA.1389.20.251) and utilized the Dutch National e-INNrastructure
with the support from the SURF Cooperative (grant number: EINN-5398).}
\thanks{Weijie Xia, Hongjin Du, Peter Palensky, and Pedro P. Vergara are with the Intelligent Electrical Power Grids (IEPG) Group, Delft University of Technology, 2628 CD Delft, The Netherlands (e-mail:\{W.xia,  H.Du, P.P.VergaraBarrios, P.Palensky\}@tudelft.nl).}
\thanks{G. Sansavini and James Ciyu Qin are with the Department of Mechanical and Process Engineering, ETH Zurich, Zurich 8092, Switzerland (e-mail: \{ciyqin, sansavig@
ethz.ch\}).}
\thanks{Edgar Mauricio Salazar Duque is with the Electrical Energy Systems Group, Eindhoven University of Technology, 5612 AE Eindhoven, The Netherlands (e-mail: e.m.salazar.duque@tue.nl).}}

\markboth{IEEE TRANSACTIONS ON POWER SYSTEMS, VOL. -, NO. -, -}%
{\MakeLowercase{\textit{Xia et al.}}: Voltage Density Approximation for Probabilistic Power Flow Using Invertible Neural Networks}


\maketitle

\begin{abstract}


Probabilistic power flow (PPF) is essential for quantifying operational uncertainty in modern power systems with high penetrations of renewable generation and flexible loads. Conventional PPF methods primarily rely on Monte Carlo (MC)-based power flow (PF) simulations or simplified approximations of voltage probability density functions. Although MC methods provide high accuracy, they incur substantial computational and data-storage costs, whereas simplified approximations often sacrifice accuracy. In this paper, we propose a novel PPF density approximation framework that avoids repeated PF simulations during inference and can, in principle, approximate complex voltage distributions without restrictive distributional assumptions. The core idea is to learn an explicit invertible mapping between stochastic power injections and system voltages using invertible neural networks (INNs). By combining this mapping with the change-of-variables theorem, the proposed framework directly evaluates voltage probability densities without repeatedly solving the PF equations. Extensive numerical studies demonstrate that the proposed framework achieves state-of-the-art performance both as an accurate PF surrogate and as an efficient PPF density estimator.

\end{abstract}
\begin{IEEEkeywords}
Probabilistic Power Flow, Invertible Neural Network, Voltage Density Approximation
\end{IEEEkeywords}
\vspace{-0.2cm}
\section{Introduction}

\IEEEPARstart{T}{he energy transition} is transforming conventional distribution systems into increasingly dynamic and complex networks, driven by the widespread adoption of low-carbon technologies such as electric vehicles, heat pumps, and photovoltaic systems. Their growing integration introduces substantial uncertainty and variability in both electricity generation and demand~\cite{liang2012wide}. Probabilistic power flow (PPF) is therefore widely used to quantify the resulting operational uncertainty by characterizing the probability distributions of system states, thereby supporting reliable planning and operation~\cite{dall2013distributed}.

Conventional PPF commonly relies on Monte Carlo (MC) simulation, in which the PF equations are repeatedly solved for a large number of sampled operating scenarios. Although MC simulation is conceptually straightforward and can provide accurate estimates given sufficiently many samples, its computational cost increases substantially with the number of scenarios and the complexity of the system, limiting its applicability to large-scale or time-critical studies~\cite{yang2019fast}. Processing and storing the resulting PF solutions also introduce considerable data-management and memory overhead.

To alleviate these limitations, existing studies have pursued three main directions: simplifying the PF equations using linearized or other approximate formulations~\cite{gao2023analytical,wang2016analytical,hong1998efficient}, reducing the number of PF evaluations through representative scenario selection~\cite{wang2020scenario,krishna2022uniform}, and accelerating individual PF evaluations using surrogate models such as neural networks (NNs)~\cite{yang2019fast,lin2024powerflownet,wu2022graph,hu2020physics}. NN-based surrogates can achieve substantial computational speedups while maintaining high predictive accuracy. However, their application to PPF generally remains sampling-based, and a large set of injection scenarios must still be generated and propagated through the surrogate to estimate the resulting voltage distributions. Conversely, closed-form density approximations based on simplified PF formulations can directly approximate selected statistical quantities or the probability density functions (PDFs) of voltage states, but their accuracy may deteriorate when their underlying assumptions do not adequately represent the nonlinear PF mapping. For example, the method in~\cite{wang2016analytical} derives voltage PDFs under a linear PF approximation, which can introduce appreciable errors under nonlinear operating conditions.

Despite the widespread use of NN-based models as PF surrogates, their application to direct density approximation in PPF remains largely unexplored. Owing to their strong representational capacity, NNs offer the potential to improve both the accuracy and flexibility of voltage-density estimation without relying on restrictive approximations of the nonlinear PF mapping. Motivated by this observation, this paper proposes, to the best of our knowledge, the first NN-based framework for direct approximation of voltage probability density functions in PPF. In particular, we employ a specialized class of NNs known as invertible neural networks (INNs). 



The proposed framework comprises four main components: (i) an INN-based model, termed the invertible mixed neural flow (IMNF); (ii) mathematical formulations for voltage-density approximation using IMNF; (iii) dedicated training strategies for IMNF; and (iv) an efficient scenario-sampling procedure. Once trained, the framework directly approximates voltage probability density functions without repeatedly solving the PF equations during inference. It thereby reduces the computational and data-handling burdens of conventional sampling-based PPF while avoiding restrictive simplifications of the nonlinear PF mapping. The main contributions of this paper are summarized as follows:

\begin{itemize}
    \item We propose an INN-based framework for direct voltage density approximation in PPF. By combining an invertible PF surrogate with the change-of-variables theorem and scenario-based marginalization, the framework can approximate complex and non-Gaussian voltage probability density functions without repeated PF computations during inference. We systematically present its mathematical foundation, density-modeling architecture, training strategy, and scenario-sampling procedure.

    \item Within this framework, we develop the IMNF model, which, to the best of our knowledge, is the first INN-based surrogate designed specifically for PF modeling. Unlike conventional NN-based approaches that require separate models for PF and inverse PF, IMNF exploits architectural invertibility to perform both transformations using a single model. This bidirectional capability enables direct evaluation of voltage probability densities and improves PF prediction accuracy. Numerical results demonstrate that IMNF outperforms conventional INN architectures in the evaluated PF tasks.
\end{itemize}

\vspace{-0.2cm}

\vspace{-0.2cm}
\section{Related Work}

\subsubsection{Voltage Density Approximation in PPF}

Approximating the probability distributions of system states constitutes an important branch of PPF research. MC-based PPF produces empirical distributions from repeated PF evaluations, whereas density approximation methods seek an explicit representation of the resulting voltage distributions. Such representations can be more readily integrated into risk assessment and other power-system planning and operation tasks~\cite{yuan2020improved}. In~\cite{wang2016analytical}, power injections are modeled using a Gaussian mixture model (GMM), and the PF mapping is approximated by a linear function. Exploiting this linearity enables the corresponding voltage density to be derived in closed form. However, the linear PF assumption may be overly restrictive and can reduce the accuracy of the resulting probability density function (PDF) under nonlinear operating conditions. To address this limitation, nonlinear and locally linear approximation methods have been investigated. For example,~\cite{gao2023analytical} proposes a piecewise-linear approach that partitions the power-injection space into multiple regimes and applies a local linear mapping within each regime. In~\cite{wang2020scenario}, representative scenarios are used to approximate the marginal voltage distribution at a target bus. With the increasing adoption of machine learning (ML) and statistical learning, data-driven and nonparametric distribution-estimation techniques have also been applied to PPF~\cite{abbasi2022comparison}. For instance, Gaussian-process-based methods have demonstrated competitive performance in PPF applications~\cite{pareek2020gaussian,xu2020probabilistic}. Despite these advances, existing voltage-density approximation methods generally rely on restrictive assumptions about the PF mapping, the underlying distributions, or both. An NN-based framework capable of learning the nonlinear PF transformation and directly evaluating voltage densities could reduce these restrictions and improve approximation accuracy.

\subsubsection{NN-based PF Solver}
The primary motivation for employing NN-based PF solvers is to exploit the inherent parallelism of NNs to accelerate PF computation. For example, in~\cite{yang2019fast}, an NN is used to accelerate PF computation, replacing the conventional Newton–Raphson method. Subsequent studies have focused on improving model accuracy mainly by incorporating additional physical knowledge. For instance, the work in~\cite{hu2020physics} leverages the invertibility of the PF mapping and trains two NNs simultaneously to enhance performance. In~\cite{wu2022graph}, a graph attention network (GAT) is integrated to explicitly exploit the topological structure of the power system. Similarly, PowerFlowNet~\cite{lin2024powerflownet} employs graph NNs (GNNs) to model system topology and further improve PF estimation accuracy. In addition, the work in~\cite{xiao2023novel} introduces a recurrent NN-based model that incorporates time-series renewable generation dynamics into PF computation. Although these NN-based models have been proposed as effective PF solvers, they primarily focus on deterministic PF computation. For PPF, NN-based models still rely on MC-based sampling to propagate input uncertainties through the learned mapping, thereby inheriting the computational burden and storage requirements of MC-based PPF. Consequently, an NN-based framework capable of directly approximating voltage probability densities remains highly desirable.

\subsubsection{Invertible NNs}
INNs are first proposed in~\cite{dinh2014nice} as a class of NN-based generative models, also known as flow-based generative models, under the non-linear independent components estimation (NICE) framework. Compared to conventional NNs, INNs are structurally invertible and allow the exact computation of the determinant of the Jacobian. One major challenge of INNs is that their invertibility constraints often limit model flexibility and expressiveness compared with other types of NN-based models. To address these limitations, a large body of subsequent research has been developed based on the original NICE architecture. For example, RealNVP~\cite{dinh2016density} extends NICE by introducing a different formulation of affine coupling layers. However, both NICE and RealNVP are still based on affine transformations, which are not sufficiently flexible for complex generative tasks. To overcome this issue, spline-based flows were introduced in~\cite{durkan2019neural,dolatabadi2020invertible}, significantly improving the expressiveness of INNs beyond affine mappings. INNs have also been applied to energy consumption profile generation, as demonstrated in~\cite{xia2025flow,ge2020modeling}. More recently, attention mechanisms have been integrated into INNs in~\cite{zhai2024normalizing}, achieving state-of-the-art performance compared with other generative models. In this paper, we leverage the invertibility of INNs and propose a hybrid INN-based model to improve their expressiveness for voltage-density approximation in PPF.

\vspace{-0.2cm}
\section{Problem Formulation}\label{ProblemFormulation}
We consider a distribution system with $N$ buses. For each bus $i \in \{1,\dots, N\}$, the active and reactive power injections are denoted by $p^{i}$ and $q^{i}$, respectively. Let
\begin{equation}
\mathbf{p} = [p^{1}, p^{2}, \dots, p^{N}], 
\qquad
\mathbf{q} = [q^{1}, q^{2}, \dots, q^{N}]
\end{equation}
denote the vectors of active and reactive injections. We assume that the joint injection density $\mathcal{P}_{\mathbf{w}}([\mathbf{p}, \mathbf{q}])$ and the bus-level densities $\mathcal{P}_{\mathbf{w}_i}([p^i, q^i])$ are known and modeled using GMMs~\cite{wang2020scenario, nijhuis2016gaussian}. The joint probability density of the power injections can then be expressed as
\begin{equation}\label{equ:jointdistribution}
\mathcal{P}_{\mathbf{w}}([\mathbf{p}, \mathbf{q}]) 
= \sum_{k=1}^{K} \pi_k  \, 
\mathcal{N}\!\left( 
\begin{bmatrix} \mathbf{p} \\ \mathbf{q} \end{bmatrix};
\, \mu_k, \Sigma_k
\right),
\end{equation}
\begin{equation}\label{equ:busdistribution}
\mathcal{P}_{\mathbf{w}_i}([p^i, q^i]) 
= \sum_{k=1}^{K_i} \pi_{k_i} \, 
\mathcal{N}\!\left( 
\begin{bmatrix} p^i \\ q^i \end{bmatrix};
\, \mu_{k_i}, \Sigma_{k_i}
\right),
\end{equation}
where $w_k$, $\mu_k$, and $\Sigma_k$ ($w_{k_i}$, $\mu_{k_i}$, and $\Sigma_{k_i}$) denote the mixture weights, mean vectors, and covariance matrices of the $k$-th Gaussian component of the corresponding distribution. In general, the PF formulation establishes a mapping between the active/reactive power $[\mathbf{p}, \mathbf{q}]$ and the bus voltage magnitudes and phase angles $[|\mathbf{v}|, \boldsymbol{\theta}]$. This relationship can be written as
\begin{equation}
    (\mathbf{p}, \mathbf{q}) \;\longrightarrow\; (|\mathbf{v}|, \boldsymbol{\theta}),
\label{eq:PF_mapping}
\end{equation}
where
$|\mathbf{v}| = \left[ |v^{1}|, |v^{2}|, \dots, |v^{N}| \right]$ and  $\boldsymbol{\theta} = \left[ \theta^{1}, \theta^{2}, \dots, \theta^{N} \right]$. In this paper, we restrict the domain to physically feasible solutions under normal steady-state operating conditions. Over this domain, the PF mapping in Eq.~\eqref{eq:PF_mapping} admits a unique solution and can also be regarded as almost bijective onto its image~\cite{dvijotham2015solving,duque2024tensor}.
Our objective is to obtain an explicit approximation of the joint probability density of the resulting voltage magnitude and phase angle at each bus,
\begin{equation}
\mathcal{P}_{\mathbf{o}_i}([|v^{i}|, \theta^{i}]), 
\qquad i \in \{1,\dots,N\}
\label{equ:jointvoltage}
\end{equation}

\section{Methodology}
\subsection{Power Flow (PF) Formulation}
The PF formulation describes the nonlinear steady-state relationship between active/reactive power injections and bus voltages in an $N$-bus network. For bus $i$, the complex power injection is
\begin{equation}
S^i = p^{i} + j q^{i} = V^i \sum_{j=1}^{N} V^{j*} Y^{ij*},
\end{equation}
where $Y^{ij}$ is the $(i,j)$-th element of the network admittance matrix, and the complex bus voltage is expressed as
\begin{equation}
V^i = |v^{i}| e^{j \theta^{i}},
\end{equation}
with $|v^{i}|$ and $\theta^{i}$ denoting the voltage magnitude and phase angle, respectively. 

\vspace{-0.2cm}

\subsection{{Adjustment of the Change of Variable Theorem for PPF}}\label{sec:math}

The \textit{Change of Variable Theorem} provides the fundamental principle for transforming probability densities through an invertible and differentiable mapping, stated formally as follows.
\begin{theorem}[Change of Variable Theorem~\cite{dinh2014nice}]\label{thm:cov}
Let $\mathbf{z} \in \mathbb{R}^{d}$ be a random vector with known density $\mathcal{P}_{\mathbf{z}}(\mathbf{z})$, and let $f : \mathbb{R}^d \rightarrow \mathbb{R}^d$ be a bijective mapping such that $f$ and its inverse $f^{-1}$ are continuously differentiable and the Jacobian determinant $\det(\partial f(\mathbf{z}) / \partial \mathbf{z})$ is non-zero almost everywhere. Then the density of the transformed variable $\mathbf{x} = f(\mathbf{z})$ is given by
\begin{subequations}\label{eq:cov_thm}
\begin{equation}
\mathcal{P}_{\mathbf{x}}(\mathbf{x})
=
\mathcal{P}_{\mathbf{z}}(f^{-1}(\mathbf{x}))
\left|
\det\!\left(
\frac{\partial f^{-1}(\mathbf{x})}{\partial \mathbf{x}}
\right)
\right|,
\label{eq:cov_general}
\end{equation}
or equivalently,
\begin{equation}
\mathcal{P}_{\mathbf{x}}(\mathbf{x})
=
\mathcal{P}_{\mathbf{z}}(\mathbf{z})
\left|
\det\!\left(
\frac{\partial f(\mathbf{z})}{\partial \mathbf{z}}
\right)
\right|^{-1}.
\label{eq:cov_general_z}
\end{equation}
\end{subequations}
\end{theorem}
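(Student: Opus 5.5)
The plan is to characterize the law of $\mathbf{x}=f(\mathbf{z})$ through probabilities of Borel sets and then convert those probabilities into integrals in $\mathbf{x}$ by the multivariate substitution rule. Fix a Borel set $B\subseteq\mathbb{R}^d$. Since $f$ is a bijection, $\{\mathbf{x}\in B\}=\{\mathbf{z}\in f^{-1}(B)\}$, so
\begin{equation*}
\Pr(\mathbf{x}\in B)=\int_{f^{-1}(B)}\mathcal{P}_{\mathbf{z}}(\mathbf{z})\,d\mathbf{z}.
\end{equation*}
Set $g=f^{-1}$, which is $C^1$ by hypothesis. Because $g$ is a $C^1$ diffeomorphism of $\mathbb{R}^d$ onto itself, the change-of-variables formula for Lebesgue integrals applies with $\mathbf{z}=g(\mathbf{x})$, giving
\begin{equation*}
\int_{f^{-1}(B)}\mathcal{P}_{\mathbf{z}}(\mathbf{z})\,d\mathbf{z}=\int_{B}\mathcal{P}_{\mathbf{z}}(g(\mathbf{x}))\,\bigl|\det\partial g(\mathbf{x})/\partial\mathbf{x}\bigr|\,d\mathbf{x}.
\end{equation*}
This holds for every $B$. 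The integrand is nonnegative and integrable, and by uniqueness of the Radon--Nikodym derivative it is a density of $\mathbf{x}$. That gives \eqref{eq:cov_general}.

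To get \eqref{eq:cov_general_z}, differentiate the identity $f(g(\mathbf{x}))=\mathbf{x}$ with the chain rule. This gives $\partial f/\partial\mathbf{z}\big|_{\mathbf{z}=g(\mathbf{x})}\cdot\partial g/\partial\mathbf{x}=I$. Taking determinants, $\det(\partial g/\partial\mathbf{x})=\det(\partial f(\mathbf{z})/\partial\mathbf{z})^{-1}$ with $\mathbf{z}=f^{-1}(\mathbf{x})$. Substituting this into \eqref{eq:cov_general} yields \eqref{eq:cov_general_z}. The same identity also shows that $\det(\partial f/\partial\mathbf{z})$ can never vanish when both $f$ and $f^{-1}$ are $C^1$, so the inverse is well defined everywhere.

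The main point of care is the justification of the substitution step. The standard theorem requires a $C^1$ bijection with nonvanishing Jacobian between open sets. The ``almost everywhere'' wording of the hypothesis is therefore stronger than needed once $f^{-1}$ is assumed $C^1$, since the chain-rule identity already forces the Jacobian to be nonzero everywhere. I would note this remark explicitly. If one instead wants to handle a Jacobian that vanishes on a null set $Z$, the approach would be to apply the formula on the open set $\mathbb{R}^d\setminus Z$ and use Sard's theorem to argue that $f(Z)$ is Lebesgue-null, so it does not affect the density.

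Finally, the density is determined only up to null sets, so the identities \eqref{eq:cov_general} and \eqref{eq:cov_general_z} should be read as holding almost everywhere.
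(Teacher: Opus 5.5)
The paper does not prove this theorem; it quotes it as a standard fact with a citation to NICE and then uses it as a black box in Proposition~\ref{prop:ppf_cov}. Your argument is the standard measure-theoretic proof, and it is correct. It runs through the pushforward identity $\Pr(\mathbf{x}\in B)=\Pr(\mathbf{z}\in f^{-1}(B))$ and the Lebesgue change-of-variables formula for the $C^1$ diffeomorphism $f^{-1}$. The second form then follows from the chain-rule identity $\det(\partial f/\partial\mathbf{z})\cdot\det(\partial f^{-1}/\partial\mathbf{x})=1$, with $\mathbf{z}=f^{-1}(\mathbf{x})$.

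Compared with the bare citation, your proof gives three things:
\begin{itemize}
\item It makes explicit that both identities hold only Lebesgue-a.e.
\item Your observation that a $C^1$ inverse forces the Jacobian determinant to be nonzero \emph{everywhere} is right, so the ``almost everywhere'' clause in the hypothesis is vacuous as stated. That clause only has content for maps like $z\mapsto z^3$, whose inverse is not differentiable on the image of the critical set, and there your Sard remark is the right repair.
\item Your argument carries over verbatim to a $C^1$ diffeomorphism between open sets $U\to V$, with $\mathcal{P}_{\mathbf{x}}=0$ off $V$. This is the form the paper actually needs, since it applies the theorem to the conditional map $g_{i,\mathbf{s}^{/i}}$ on the feasible region, not to a bijection of all of $\mathbb{R}^2$. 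That application also needs $g_{i,\mathbf{s}^{/i}}$ to be injective on the whole region, which the local bijectivity of Proposition~\ref{prop:cond_jacobian} does not supply by itself.
\end{itemize}

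One small point of attribution: the identity $\Pr(\mathbf{x}\in B)=\int_B(\cdots)\,d\mathbf{x}$ for all Borel $B$ is already the definition of the integrand being a density of $\mathbf{x}$. Radon--Nikodym uniqueness only says that this density is determined up to null sets.
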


\begin{figure*}[htp]
    \centering
    \includegraphics[width=0.95\linewidth]{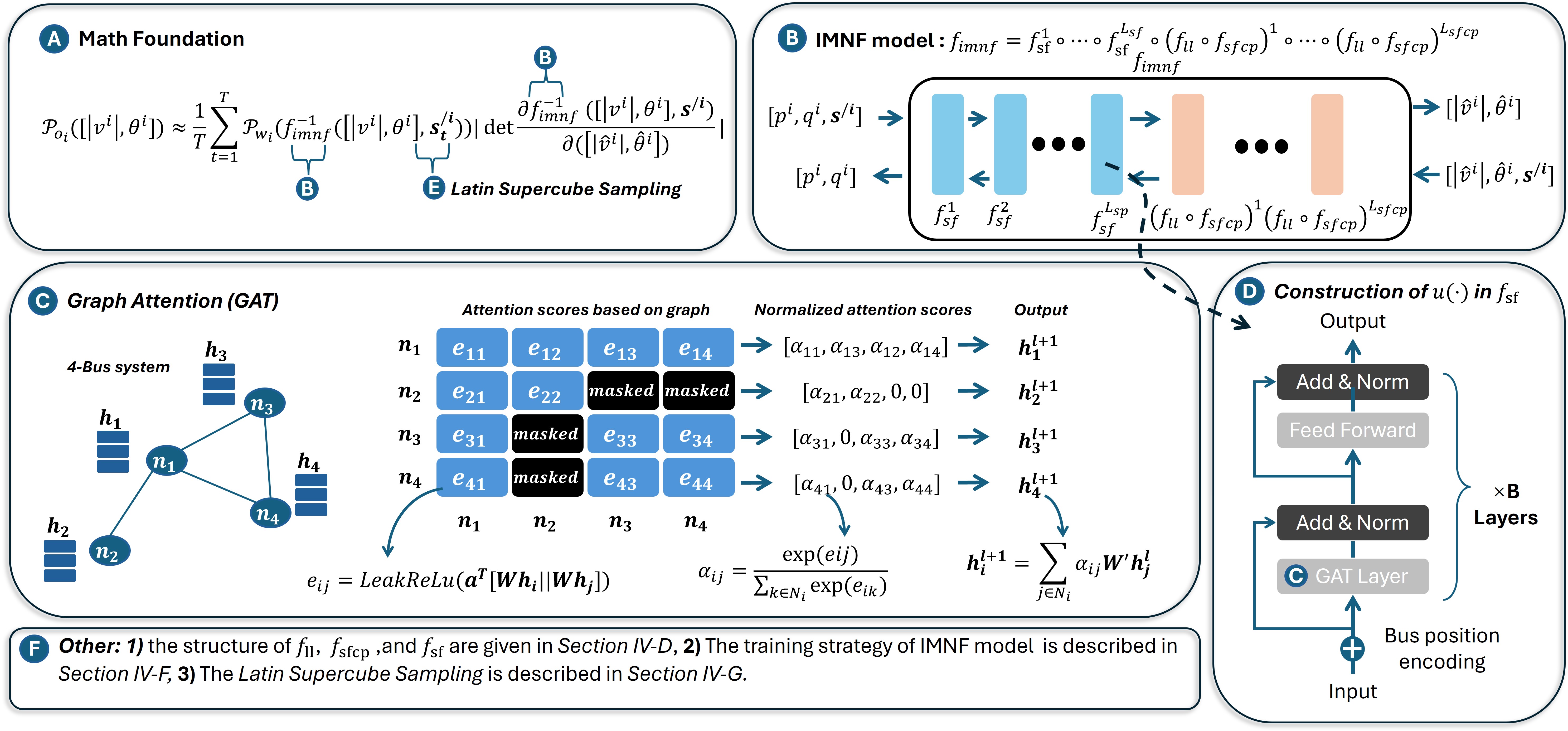}

    \caption{Overview of the proposed probabilistic power flow framework.
(A) Mathematical foundation of the density estimation.
(B) Construction of the IMNF model.
(C) GAT for topology-aware feature aggregation.
(D) Transformer-style block structure used within $f_{sp}$.
(E) LSS for efficient numerical integration.
}

    \label{fig:methodsummary}
    \vspace{-0.5cm}
\end{figure*}

\vspace{-0.1cm}

However, directly applying Eq.~(\ref{eq:cov_general}) requires evaluating the high-dimensional joint voltage distribution of the entire system, whereas PPF studies typically focus on the marginal voltage distribution at a particular bus. To address this mismatch, let
\(\mathbf{s}^{/i}=[\mathbf{p}^{/i},\mathbf{q}^{/i}]\) collect the power injections at all buses except bus \(i\), and, for each fixed feasible \(\mathbf{s}^{/i}\), define the conditional PF mapping
\begin{equation}
g_{i,\mathbf{s}^{/i}}([p^i,q^i])
:=
\pi_i f([p^i,q^i],\mathbf{s}^{/i})
=
[|v^i|,\theta^i],
\label{eq:conditional_pf_mapping}
\end{equation}
where \(\pi_i\) selects the voltage magnitude and phase angle at bus \(i\) from the output of the full PF mapping \(f\). As stated in Section~\ref{ProblemFormulation}, we restrict attention to the feasible operating regime under normal steady-state conditions. Within this regime, we assume that, for every feasible \(\mathbf{s}^{/i}\), the conditional mapping \(g_{i,\mathbf{s}^{/i}}\) is continuously differentiable with a nonsingular Jacobian. The following proposition expresses this conditional Jacobian.

\begin{proposition}[Conditional Power Flow Jacobian]\label{prop:cond_jacobian}
Let \(J=\partial(\mathbf{p},\mathbf{q})/\partial(|\mathbf{v}|,\boldsymbol{\theta})\) denote the full PF Jacobian and partition it according to bus \(i\) and the remaining buses as
\begin{equation}
\begin{bmatrix} d[p^i,q^i] \\ d\mathbf{s}^{/i} \end{bmatrix}
=
\begin{bmatrix} A & B \\ C & D \end{bmatrix}
\begin{bmatrix} d[|v^i|,\theta^i] \\ d\mathbf{u}^{/i} \end{bmatrix},
\label{eq:jacobian_partition}
\end{equation}
where \(\mathbf{u}^{/i}=[|\mathbf{v}|^{/i},\boldsymbol{\theta}^{/i}]\), and \(A,B,C,D\) are the corresponding blocks of \(J\). If \(D\) and \(A-BD^{-1}C\) are nonsingular, then the Jacobian of the conditional PF mapping in Eq.~(\ref{eq:conditional_pf_mapping}) is
\begin{equation}
\frac{\partial g_{i,\mathbf{s}^{/i}}([p^i,q^i])}
{\partial[p^i,q^i]}
=
(A-BD^{-1}C)^{-1}.
\label{eq:cond_jacobian}
\end{equation}
Consequently, \(g_{i,\mathbf{s}^{/i}}\) is locally bijective at the considered operating point.
\end{proposition}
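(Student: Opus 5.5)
The argument is a linear-algebra computation on the differential relation in Eq.~(\ref{eq:jacobian_partition}), followed by the inverse function theorem. The conditional mapping $g_{i,\mathbf{s}^{/i}}$ holds $\mathbf{s}^{/i}$ fixed. Along any curve of operating points on which only $[p^i,q^i]$ varies, the constraint is therefore $d\mathbf{s}^{/i}=0$. Under this constraint we will solve the block system for $d[|v^i|,\theta^i]$ in terms of $d[p^i,q^i]$. The linear map we obtain is, by definition, the Jacobian of $g_{i,\mathbf{s}^{/i}}$.

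The steps are as follows. First, we make the implicit structure precise. The PF equations give the smooth map $F:(|\mathbf{v}|,\boldsymbol{\theta})\mapsto(\mathbf{p},\mathbf{q})$ with Jacobian $J$. The map $f$ in Eq.~(\ref{eq:conditional_pf_mapping}) is its local inverse on the feasible regime, which is where the standing assumption of almost-bijectivity from Section~\ref{ProblemFormulation} enters. Second, we impose $d\mathbf{s}^{/i}=0$ in the second block row:
\[
0=C\,d[|v^i|,\theta^i]+D\,d\mathbf{u}^{/i}.
\]
Since $D$ is nonsingular, this gives $d\mathbf{u}^{/i}=-D^{-1}C\,d[|v^i|,\theta^i]$. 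Third, we substitute into the first block row:
\[
d[p^i,q^i]=(A-BD^{-1}C)\,d[|v^i|,\theta^i].
\]
Because the Schur complement $A-BD^{-1}C$ is nonsingular, we can invert it to get
\[
d[|v^i|,\theta^i]=(A-BD^{-1}C)^{-1}d[p^i,q^i],
\]
which is Eq.~(\ref{eq:cond_jacobian}).

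To make the differential manipulation rigorous, we apply the implicit function theorem to the equation $F_{/i}(|v^i|,\theta^i,\mathbf{u}^{/i})=\mathbf{s}^{/i}$, which is the second block of $F$. Nonsingularity of $D=\partial F_{/i}/\partial\mathbf{u}^{/i}$ yields a smooth local solution $\mathbf{u}^{/i}=h([|v^i|,\theta^i])$ whose Jacobian is $-D^{-1}C$. The composed map $\phi:[|v^i|,\theta^i]\mapsto F_i([|v^i|,\theta^i],h([|v^i|,\theta^i]))$ then has Jacobian $A-BD^{-1}C$ by the chain rule. The map $g_{i,\mathbf{s}^{/i}}$ is the inverse of $\phi$. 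Since the Schur complement is nonsingular, the inverse function theorem gives that $\phi$ is a local diffeomorphism, that its inverse is $g_{i,\mathbf{s}^{/i}}$, and that its Jacobian is $(A-BD^{-1}C)^{-1}$. This establishes both Eq.~(\ref{eq:cond_jacobian}) and the local bijectivity claim.

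The main obstacle is identifying the locally constructed inverse of $\phi$ with the globally defined $g_{i,\mathbf{s}^{/i}}=\pi_i f(\cdot,\mathbf{s}^{/i})$. This requires that the local branch chosen by the implicit function theorem coincides with the PF solution selected by $f$. We will get this from the uniqueness of the PF solution in the restricted feasible regime: both maps return a feasible solution with the prescribed injections, so they agree near the operating point. Finally, we note that if $J$ itself is nonsingular and $D$ is nonsingular, the Schur determinant identity $\det J=\det D\cdot\det(A-BD^{-1}C)$ makes the second hypothesis automatic. This gives a consistency check, though it is not needed for the proof.
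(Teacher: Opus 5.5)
Your proposal is correct and follows the paper's proof: impose $d\mathbf{s}^{/i}=0$, use the second block row to get $d\mathbf{u}^{/i}=-D^{-1}C\,d[|v^i|,\theta^i]$, substitute into the first block row, and invert the Schur complement $A-BD^{-1}C$. Your implicit/inverse function theorem wrapper and the branch-identification remark make rigorous what the paper does as a formal differential computation, and they also supply the justification for the local-bijectivity conclusion, which the paper states without further argument.
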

\begin{proof}
By definition,
\begin{equation}
A = \frac{\partial [p^i,q^i]}{\partial[|v^i|,\theta^i]}, \;\;
B = \frac{\partial [p^i,q^i]}{\partial \mathbf{u}^{/i}}, \;\;
C = \frac{\partial \mathbf{s}^{/i}}{\partial[|v^i|,\theta^i]}, \;\;
D = \frac{\partial \mathbf{s}^{/i}}{\partial \mathbf{u}^{/i}}.
\end{equation}
Because all other injections are held fixed, \(d\mathbf{s}^{/i}=0\). The second block row of Eq.~(\ref{eq:jacobian_partition}) therefore gives
\(d\mathbf{u}^{/i}=-D^{-1}C\,d[|v^i|,\theta^i]\). Substituting this expression into the first block row yields
\begin{equation}
d[p^i,q^i]
=
(A-BD^{-1}C)\,d[|v^i|,\theta^i].
\end{equation}
Inverting this relation gives Eq.~(\ref{eq:cond_jacobian}). 
\end{proof}

Proposition~\ref{prop:cond_jacobian} allows Theorem~\ref{thm:cov} to be applied to \(g_{i,\mathbf{s}^{/i}}\) for every feasible conditioning scenario. Marginalizing the resulting conditional density over \(\mathbf{s}^{/i}\) yields the following result.

\begin{proposition}[Adjusted Change of Variable Theorem for PPF]\label{prop:ppf_cov}
Let \(\mathbf{s}^{/i}\) be distributed according to \(\mathcal{P}_{\mathbf{w}_{/i}}(\cdot)\), and let \(\{\mathbf{s}^{/i}_t\}_{t=1}^{T}\) be \(T\) samples drawn from this distribution. For a target voltage state \(\mathbf{o}_i=[|v^i|,\theta^i]\), define
\(\mathbf{x}_{i,t}=g_{i,\mathbf{s}^{/i}_t}^{-1}(\mathbf{o}_i)\), where \(\mathbf{x}_{i,t}=[p^i_t,q^i_t]\). Under the conditional regularity assumptions stated above, its marginal density is approximated by
\begin{subequations}\label{eq:prop_thm}
\begin{equation}
\begin{aligned}
\mathcal{P}_{\mathbf{o}_i}(\mathbf{o}_i)
\approx\
&\frac{1}{T}\sum_{t=1}^{T}
\mathcal{P}_{\mathbf{w}_i}(\mathbf{x}_{i,t} \mid \mathbf{s}^{/i}_t) \\
&\times
\left|
\det\!\left(
\left.
\frac{\partial g_{i,\mathbf{s}^{/i}_t}(\mathbf{x}_i)}
{\partial\mathbf{x}_i}
\right|_{\mathbf{x}_i=\mathbf{x}_{i,t}}
\right)
\right|^{-1},
\end{aligned}
\label{eq:1bus_marginal1}
\end{equation}
or equivalently, in explicit PDF form,
\begin{equation}
\begin{aligned}
\mathcal{P}_{\mathbf{o}_i}(\mathbf{o}_i)
\approx\
&\frac{1}{T}\sum_{t=1}^{T}
\mathcal{P}_{\mathbf{w}_i}\!\left(g_{i,\mathbf{s}^{/i}_t}^{-1}(\mathbf{o}_i) \mid \mathbf{s}^{/i}_t\right) \\
&\times
\left|
\det\!\left(
\frac{
\partial g_{i,\mathbf{s}^{/i}_t}^{-1}(\mathbf{o}_i)
}{
\partial\mathbf{o}_i
}
\right)
\right|.
\end{aligned}
\label{eq:1bus_marginal2}
\end{equation}
\end{subequations}
\end{proposition}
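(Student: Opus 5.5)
The plan is to derive an exact marginalization identity first, and only at the end pass to the Monte Carlo approximation. By the law of total probability, the target marginal density decomposes as
\begin{equation*}
\mathcal{P}_{\mathbf{o}_i}(\mathbf{o}_i)=\int \mathcal{P}_{\mathbf{o}_i\mid \mathbf{s}^{/i}}(\mathbf{o}_i\mid \mathbf{s}^{/i})\,\mathcal{P}_{\mathbf{w}_{/i}}(\mathbf{s}^{/i})\,d\mathbf{s}^{/i}
=\mathbb{E}_{\mathbf{s}^{/i}\sim\mathcal{P}_{\mathbf{w}_{/i}}}\!\left[\mathcal{P}_{\mathbf{o}_i\mid \mathbf{s}^{/i}}(\mathbf{o}_i\mid \mathbf{s}^{/i})\right].
\end{equation*}
This holds because the full PF mapping $f$ is deterministic. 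Hence, conditional on $\mathbf{s}^{/i}$, the randomness in $\mathbf{o}_i$ comes only from $\mathbf{x}_i=[p^i,q^i]$ distributed as $\mathcal{P}_{\mathbf{w}_i}(\cdot\mid\mathbf{s}^{/i})$, pushed forward through $g_{i,\mathbf{s}^{/i}}$ defined in Eq.~(\ref{eq:conditional_pf_mapping}).

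The second step identifies the conditional density. For each fixed feasible $\mathbf{s}^{/i}$, I apply Theorem~\ref{thm:cov} with $d=2$, $\mathbf{z}=\mathbf{x}_i$ and $f=g_{i,\mathbf{s}^{/i}}$. The hypotheses of the theorem follow from the standing regularity assumption (continuous differentiability with nonsingular Jacobian) together with Proposition~\ref{prop:cond_jacobian}. That proposition gives the Jacobian explicitly as $(A-BD^{-1}C)^{-1}$, which is nonsingular, so the determinant is nonzero. By the inverse function theorem, $g^{-1}_{i,\mathbf{s}^{/i}}$ is then $C^1$ with Jacobian $A-BD^{-1}C$. Substituting $\mathbf{x}_{i}=g^{-1}_{i,\mathbf{s}^{/i}}(\mathbf{o}_i)$ into Eq.~(\ref{eq:cov_general_z}) gives the integrand of Eq.~(\ref{eq:1bus_marginal1}), and Eq.~(\ref{eq:cov_general}) gives that of Eq.~(\ref{eq:1bus_marginal2}). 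The two forms agree because $\det(\partial g^{-1}/\partial\mathbf{o}_i)=\det(\partial g/\partial\mathbf{x}_i)^{-1}$ at corresponding points, which again is the inverse function theorem.

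The final step replaces the expectation by the empirical average over the i.i.d.\ samples $\{\mathbf{s}^{/i}_t\}_{t=1}^T$. This estimator is unbiased and, by the strong law of large numbers, converges almost surely to $\mathcal{P}_{\mathbf{o}_i}(\mathbf{o}_i)$ as $T\to\infty$ whenever the conditional density is integrable. Integrability holds automatically, since its expectation is the finite marginal density, at least for almost every $\mathbf{o}_i$. This justifies the ``$\approx$'' in the statement.

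The main obstacle is the global, rather than local, bijectivity needed to apply Theorem~\ref{thm:cov}. Proposition~\ref{prop:cond_jacobian} only yields local invertibility, whereas the theorem requires $g_{i,\mathbf{s}^{/i}}$ to be a bijection from the feasible injection region onto its image. I would handle this by invoking the restriction, stated in Section~\ref{ProblemFormulation}, to the normal steady-state regime where the PF solution is unique. Uniqueness implies that $g_{i,\mathbf{s}^{/i}}$ is injective on the feasible set, and together with the nonvanishing Jacobian it is then a $C^1$ diffeomorphism onto its image. For $\mathbf{o}_i$ outside the image of $g_{i,\mathbf{s}^{/i}_t}$, the $t$-th term is set to zero. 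I would also note explicitly that the conditional injection density $\mathcal{P}_{\mathbf{w}_i}(\cdot\mid\mathbf{s}^{/i})$ is obtained from the joint GMM in Eq.~(\ref{equ:jointdistribution}), whose Gaussian conditionals remain a GMM.
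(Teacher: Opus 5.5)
Your argument is the paper's proof with the first two steps swapped. You use the law of total probability over $\mathbf{s}^{/i}$, apply the change-of-variable theorem to $g_{i,\mathbf{s}^{/i}}$ for each fixed $\mathbf{s}^{/i}$, replace the integral by a sample mean, and link the two forms by the inverse-function identity. Your unbiasedness and strong-law justification of ``$\approx$'', and the zero convention for $\mathbf{o}_i$ outside the image of $g_{i,\mathbf{s}^{/i}_t}$, are correct refinements that the paper leaves implicit.

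The one step that fails is in your last paragraph: uniqueness of the PF solution does not make $g_{i,\mathbf{s}^{/i}}$ injective. Uniqueness says that the full injection vector $(\mathbf{x}_i,\mathbf{s}^{/i})$ determines all voltages. Injectivity of $g_{i,\mathbf{s}^{/i}}$ is a claim about a different system of equations. It requires that, once the bus-$i$ voltage is fixed to $\mathbf{o}_i$ and the other injections to $\mathbf{s}^{/i}$, the remaining voltages $\mathbf{u}^{/i}$ (and hence $\mathbf{x}_i$) are unique. That is uniqueness for the PF problem in which bus $i$ is voltage-specified. Two states $f(\mathbf{x}_i,\mathbf{s}^{/i})$ and $f(\mathbf{x}_i',\mathbf{s}^{/i})$ that agree at bus $i$ but differ in $\mathbf{u}^{/i}$ carry different injection vectors, so they do not contradict uniqueness. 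For example, the voltage-to-injection map $(o,u)\mapsto(x,s)=(u,\,u^2+o)$ on $\{u\neq 0\}$ is injective, and $D=2u$ and $A-BD^{-1}C=-1/(2u)$ are nonsingular, yet $g_s(x)=s-x^2$ is two-to-one.

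You have two options. One is to take global invertibility of $g_{i,\mathbf{s}^{/i}}$ as a hypothesis; the paper does this tacitly by writing $g^{-1}_{i,\mathbf{s}^{/i}_t}(\mathbf{o}_i)$ in the statement and invoking Theorem~\ref{thm:cov} ``directly''. The other is to avoid it altogether. The nonsingular Jacobian (cf.\ Proposition~\ref{prop:cond_jacobian}) already makes $g_{i,\mathbf{s}^{/i}}$ a local diffeomorphism, and the area formula then gives
\[
\mathcal{P}_{\mathbf{o}_i}(\mathbf{o}_i\mid\mathbf{s}^{/i})=\sum_{\mathbf{x}\in g_{i,\mathbf{s}^{/i}}^{-1}(\{\mathbf{o}_i\})}\mathcal{P}_{\mathbf{w}_i}(\mathbf{x}\mid\mathbf{s}^{/i})\,\bigl|\det\bigl(\partial g_{i,\mathbf{s}^{/i}}(\mathbf{x})/\partial\mathbf{x}\bigr)\bigr|^{-1},
\]
of which the proposition is the single-preimage case.
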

\begin{proof}
For each fixed feasible \(\mathbf{s}^{/i}\), the assumptions on
\(g_{i,\mathbf{s}^{/i}}\) allow the change-of-variable theorem to be applied directly to the conditional transformation
\(\mathbf{x}_i=[p^i,q^i]\mapsto\mathbf{o}_i=[|v^i|,\theta^i]\). Hence,
\begin{equation}
\mathcal{P}_{\mathbf{o}_i}(\mathbf{o}_i \mid \mathbf{s}^{/i})
=
\mathcal{P}_{\mathbf{w}_i}\!\left(
g_{i,\mathbf{s}^{/i}}^{-1}(\mathbf{o}_i)
\mid \mathbf{s}^{/i}
\right)
\left|
\det\!\left(
\frac{
\partial g_{i,\mathbf{s}^{/i}}^{-1}(\mathbf{o}_i)
}{
\partial\mathbf{o}_i
}
\right)
\right|.
\label{eq:1buscondition}
\end{equation}

The law of total probability then gives the marginal density
\begin{equation}
\begin{aligned}
\mathcal{P}_{\mathbf{o}_i}(\mathbf{o}_i)
&=
\int
\mathcal{P}_{\mathbf{w}_{/i}}(\mathbf{s}^{/i})
\,\mathcal{P}_{\mathbf{w}_i}\!\left(
g_{i,\mathbf{s}^{/i}}^{-1}(\mathbf{o}_i)
\mid\mathbf{s}^{/i}
\right)
\\
&\quad \times
\left|
\det\!\left(
\frac{
\partial g_{i,\mathbf{s}^{/i}}^{-1}(\mathbf{o}_i)
}{
\partial\mathbf{o}_i
}
\right)
\right|
\, \mathrm{d}\mathbf{s}^{/i}.
\end{aligned}
\label{eq:integration}
\end{equation}

Approximating Eq.~(\ref{eq:integration}) using the samples
\(\{\mathbf{s}^{/i}_t\}_{t=1}^{T}\) yields Eq.~(\ref{eq:1bus_marginal2}). The inverse-function identity
\begin{equation}
\left|
\det\!\left(
\frac{\partial g_{i,\mathbf{s}^{/i}}^{-1}(\mathbf{o}_i)}
{\partial\mathbf{o}_i}
\right)
\right|
=
\left.
\left|
\det\!\left(
\frac{\partial g_{i,\mathbf{s}^{/i}}(\mathbf{x}_i)}
{\partial\mathbf{x}_i}
\right)
\right|^{-1}
\right|_{\mathbf{x}_i=g_{i,\mathbf{s}^{/i}}^{-1}(\mathbf{o}_i)}
\end{equation}
then gives the equivalent form in Eq.~(\ref{eq:1bus_marginal1}).
\end{proof}

Eq.~(\ref{eq:1bus_marginal1}) is analogous to conditional normalizing flows~\cite{winkler2019learning}, with the remaining-bus injections \(\mathbf{s}^{/i}\) serving as the conditioning variables and the conditional Jacobian obtained from the physical PF Jacobian through Proposition~\ref{prop:cond_jacobian}.


As the probability density of the power injections
\(\mathcal{P}_{\mathbf{w}}(\cdot)\) is assumed to be known, and \textit{the conditional mapping \(g_{i,\mathbf{s}^{/i}}([p^i,q^i])\) (or \(\pi_i f([p^i,q^i],\mathbf{s}^{/i})\)) can be learned from PF data using an INN and its Jacobian determinant evaluated directly}, Proposition~\ref{prop:ppf_cov} provides a tractable way to estimate the voltage distribution of an individual bus without full-system MC-based PPF after training. Although the scenarios \(\{\mathbf{s}^{/i}_t\}\) are sampled for marginalization, no PF computation is required. Instead, each density contribution is evaluated directly using the learned conditional mapping and its Jacobian determinant.

\vspace{-0.3cm}
\subsection{Proposed Framework Overview}

Fig.~\ref{fig:methodsummary} presents a comprehensive overview of the proposed framework. 
The overall methodology is grounded in the probabilistic formulation derived in 
(\ref{eq:1bus_marginal2}), introduced in Section~\ref{sec:math}. The core component of the framework is the IMNF model, which is detailed in Section~\ref{sec:imnf}. To capture spatial dependencies imposed by  the physical network topology, a GAT is incorporated, as  described in Section~\ref{sec:gat}. The training strategy of the IMNF model is explained  in Section~\ref{sec:trainingdesign}, while the sampling procedure based on Latin  Supercube Sampling (LSS) is presented in Section~\ref{sec:lss}. All implementation details and reproducible code are publicly available in our GitHub repositories\footnote{
The personal repository is available at 
\href{https://github.com/xiaweijie1996/Prob-for-PPF}{Personal GitHub Repository}, 
and the TU Delft repository is available at 
\href{https://github.com/distributionnetworksTUDelft/Invertible_NNs_for_PPF}{TU Delft GitHub Repository}.
}.


\vspace{-0.2cm}
\subsection{{Invertible Mixed Neural Flow (IMNF)}}\label{sec:imnf}
To model $f(\cdot)$, we propose IMNF, a hybrid INN-based model built by stacking three elementary invertible layers: a \textit{Linear layer}, an \textit{Exp layer}, and a \textit{Spline layer}~\cite{durkan2019neural}. For bus $i$, all three layers act on the vector $\mathbf{x}^{i} = [p^{i}, q^{i}]$, conditioned on the scenario $\mathbf{s}^{/i}$ collecting the power injections at every other bus. We write $\mathbf{x}^{i}_{l}$ for the state of this vector after $l$ layers, so the model input is $\mathbf{x}^{i}_{0} = [p^{i}_0, q^{i}_0] = [p^{i}, q^{i}]$. Each layer below is described by its forward map, its inverse map, and its Jacobian, since these three ingredients are exactly what is needed to evaluate Eq.~(\ref{eq:1bus_marginal2}).

\subsubsection{Linear Layer}\label{sec:linear}
The linear layer, denoted $f_{\mathrm{ll}}(\cdot)$, applies a trainable, invertible $2\times2$ matrix $A$ to the input vector, with no dependence on $\mathbf{s}^{/i}$:
\begin{align}
& \text{Forward:} & [p^{i}_1, q^{i}_1] &= f_{\mathrm{ll}}([p^{i}_0, q^{i}_0]) = A\,[p^{i}_0, q^{i}_0], \\
& \text{Inverse:} & [p^{i}_0, q^{i}_0] &= f_{\mathrm{ll}}^{-1}([p^{i}_1, q^{i}_1]) = A^{-1}[p^{i}_1, q^{i}_1], \\
& \text{Jacobian:} & J_{f_{\mathrm{ll}}} &= A, \qquad \left|\det J_{f_{\mathrm{ll}}}\right| = \left|\det A\right|.
\end{align}
Because $A$ does not depend on the input, its Jacobian is simply $A$ itself, and the determinant is constant across the domain.

\subsubsection{Exp Layer}\label{sec:exp}
The Exp layer, denoted $f_{\mathrm{sfcp}}(\cdot)$ (following the naming of Simplified FCPFlow~\cite{xia2025flow}), is an affine coupling transformation: it updates one coordinate using a NN-predicted, exponentiated scale and shift that depend on the other coordinate and on $\mathbf{s}^{/i}$, then alternates the roles of the two coordinates. Given an input $[p^{i}_0, q^{i}_0]$,
\begin{align}
& \text{Forward:} & q^{i}_1 &= q^{i}_0 \odot \exp\!\bigl(s_1(p^{i}_0, \mathbf{s}^{/i})\bigr) + t_1(p^{i}_0, \mathbf{s}^{/i}), \\
& & p^{i}_1 &= p^{i}_0 \odot \exp\!\bigl(s_2(q^{i}_1, \mathbf{s}^{/i})\bigr) + t_2(q^{i}_1, \mathbf{s}^{/i}), \\
& \text{Inverse:} & p^{i}_0 &= \bigl(p^{i}_1 - t_2(q^{i}_1, \mathbf{s}^{/i})\bigr) \odot \exp\!\bigl(-s_2(q^{i}_1, \mathbf{s}^{/i})\bigr), \\
& & q^{i}_0 &= \bigl(q^{i}_1 - t_1(p^{i}_0, \mathbf{s}^{/i})\bigr) \odot \exp\!\bigl(-s_1(p^{i}_0, \mathbf{s}^{/i})\bigr),
\end{align}
so that $[p^{i}_1, q^{i}_1] = f_{\mathrm{sfcp}}([p^{i}_0, q^{i}_0], \mathbf{s}^{/i})$, where $s_1(\cdot), t_1(\cdot), s_2(\cdot), t_2(\cdot)$ are NNs. The forward map is a composition of two coupling half-steps, each updating one coordinate while holding the other fixed as the conditioner: $h : (p^{i}_0, q^{i}_0) \mapsto (p^{i}_0, q^{i}_1)$, followed by $g : (p^{i}_0, q^{i}_1) \mapsto (p^{i}_1, q^{i}_1)$, so that $f_{\mathrm{sfcp}} = g \circ h$. Each half-step accordingly has a triangular Jacobian,
\begin{equation}
J_h =
\begin{bmatrix}
1 & 0 \\[4pt]
\dfrac{\partial q^{i}_1}{\partial p^{i}_0} & \exp\!\bigl(s_1(p^{i}_0,\mathbf{s}^{/i})\bigr)
\end{bmatrix}\!, \;\;
J_g =
\begin{bmatrix}
\exp\!\bigl(s_2(q^{i}_1,\mathbf{s}^{/i})\bigr) & \dfrac{\partial p^{i}_1}{\partial q^{i}_1} \\[4pt]
0 & 1
\end{bmatrix}\!,
\end{equation}
with $\det J_h = \exp(s_1(p^{i}_0,\mathbf{s}^{/i}))$ and $\det J_g = \exp(s_2(q^{i}_1,\mathbf{s}^{/i}))$. By the chain rule, the Jacobian of the full Exp layer is $J_{f_{\mathrm{sfcp}}} = J_g J_h$; this product is not itself triangular, since the $(1,2)$ entry of $J_g$ and the $(2,1)$ entry of $J_h$ are generally nonzero, but its determinant is still the product of the two half-step determinants,
\begin{equation}
\left|\det J_{f_{\mathrm{sfcp}}}\right| = \left|\det J_g\right|\left|\det J_h\right| = \exp\!\bigl(s_1(p^{i}_0, \mathbf{s}^{/i}) + s_2(q^{i}_1, \mathbf{s}^{/i})\bigr).
\end{equation}
\subsubsection{Spline Layer}\label{sec:sf}
The spline layer, denoted $f_{\mathrm{sf}}(\cdot)$, replaces the affine coupling of the Exp layer with an elementwise monotonic rational–quadratic spline. For the detailed forward/inverse construction and the closed-form derivatives, we refer the reader to~\cite{durkan2019neural}.

\subsubsection{Stacking the Linear, Exp, and Spline Layers}
In the previous sections, we introduced the three elementary layers $f_{\mathrm{ll}}(\cdot)$, $f_{\mathrm{sfcp}}(\cdot)$, and $f_{\mathrm{sf}}(\cdot)$, together with their inverses $f_{\mathrm{ll}}^{-1}(\cdot)$, $f_{\mathrm{sfcp}}^{-1}(\cdot)$, and $f_{\mathrm{sf}}^{-1}(\cdot)$. The proposed IMNF model $f_{\mathrm{imnf}}(\cdot)$ is constructed by stacking these layers: the linear and Exp layers are paired and alternated $L_{\mathrm{sfcp}}$ times, and the result is followed by $L_{\mathrm{sf}}$ Spline layers, i.e.,
\begin{equation}
    f_{\mathrm{imnf}}
=
f_{\mathrm{sf}}^{1}
\circ \cdots \circ
f_{\mathrm{sf}}^{L_{\mathrm{sf}}}
\circ
f_{\mathrm{ll}}^{1}
\circ
f_{\mathrm{sfcp}}^{1}
\circ \cdots \circ
f_{\mathrm{ll}}^{L_{\mathrm{sfcp}}}
\circ
f_{\mathrm{sfcp}}^{L_{\mathrm{sfcp}}}(\cdot),
\label{eq:imnf_compose}
\end{equation}
where $L_{\mathrm{sf}}$ and $L_{\mathrm{sfcp}}$ denote the number of stacked Spline layers and linear-Exp pairs, respectively, and $\circ$ represents function composition, i.e., $(f \circ g)(x) = f(g(x))$. Since every layer in Eq.~(\ref{eq:imnf_compose}) is invertible with a closed-form Jacobian (Sections~\ref{sec:linear}--\ref{sec:sf}), and the determinant of a Jacobian is multiplicative under composition, applying this property recursively across all layers shows that the determinant of $J_{f_{\mathrm{imnf}}}$ is simply the product of the per-layer determinants,
\begin{equation}
\left|\det J_{f_{\mathrm{imnf}}}\right|
=
\prod_{l=1}^{L_{\mathrm{sf}}} \left|\det J_{f_{\mathrm{sf}}^{l}}\right|
\;\times\;
\prod_{l=1}^{L_{\mathrm{sfcp}}} \left|\det J_{f_{\mathrm{ll}}^{l}}\right|\left|\det J_{f_{\mathrm{sfcp}}^{l}}\right|,
\label{eq:imnf_det}
\end{equation}
which can be accumulated analytically during the forward pass at no extra computational cost, and is the quantity used in the density expression of Eq.~(\ref{eq:1bus_marginal2}) and in the Jacobian-consistency loss of Section~\ref{sec:trainingdesign}.


\subsection{Graph Attention Networks (GAT)}\label{sec:gat}
To better exploit the available topological information and improve model performance, instead of feedforward neural networks (FNNs), we incorporate GAT~\cite{wu2022graph, velivckovic2017graph} to construct the mapping $f_{sp}(\cdot)$. In the context of distribution systems, each bus $i$ is associated with a feature vector $[p^i, q^i]$, which can be regarded as a token in the attention mechanism. For an $N$-bus system, this results in $N$ tokens.

GAT injects topological priors by enforcing a \textit{masked attention} mechanism, where each token attends only to its neighbors as defined by the network topology. This restricts the computation of attention coefficients to adjacent tokens, making the model more physically meaningful for power systems. The GAT can be expressed as
\begin{equation}
     \text{GAT:} \quad 
    \alpha_{ij} = \text{softmax}_j(e_{ij})
    = \frac{\exp(e_{ij})}{\sum_{k \in \mathcal{N}_i} \exp(e_{ik})}, \\
\end{equation}
where $\mathcal{N}_i$ denotes the neighboring tokens of token $i$, and $e_{ij}$ is the unnormalized attention score between tokens $i$ and $j$, defined as
\begin{equation}
e_{ij} = 
\text{LeakyReLU}
\left(
\mathbf{a}^{\top}
\left[
\mathbf{W}\mathbf{h}_i
 \Vert 
\mathbf{W}\mathbf{h}_j
\right]
\right),
\end{equation}
with $\mathbf{h}_i$ and $\mathbf{h}_j$ denoting the feature vectors of tokens $i$ and $j$, $\mathbf{W}$ a learnable weight matrix, $\mathbf{a}$ a learnable attention vector, and $\Vert$ representing concatenation. We use absolute positional encoding to identify each bus (token, see Fig.~\ref{fig:methodsummary}(D)), so a single GAT model is shared across all buses and used for both training and inference, rather than training a separate model per bus.
\vspace{-0.1cm}
\subsection{Training Design}\label{sec:trainingdesign}
To train the proposed IMNF model, we adopt the mean squared error (MSE) as the primary loss function. Moreover, to fully exploit the physical property of the PF, the inherent invertibility between $(\mathbf{p}, \mathbf{q})$ and  $(|\mathbf{v}|, \boldsymbol{\theta})$, we introduce a \textit{bidirectional training} strategy.  This training mechanism enforces consistency in both the forward and inverse transformation, and we find empirically that it significantly improves robustness and accelerates convergence. The procedure is summarized in Algorithm~\ref{alg:imnf_training}. Formally, the forward and inverse losses at bus \(i\), conditioned on \(\mathbf{s}^{/i}\), are defined as
\begin{subequations}\label{eq:bidirectional_losses}
\begin{align}
    \text{loss}_{w2o} &=
    \mathrm{MSE}\!\left(
    f_{\mathrm{imnf}}([p^i,q^i],\mathbf{s}^{/i}),
    [|v^i|,\theta^i]
    \right),
    \label{eq:loss_w2o}\\
    \text{loss}_{o2w} &=
    \mathrm{MSE}\!\left(
    f_{\mathrm{imnf}}^{-1}([|v^i|,\theta^i],\mathbf{s}^{/i}),
    [p^i,q^i]
    \right).
    \label{eq:loss_o2w}
\end{align}
\end{subequations}
Beyond matching point predictions, the density in Eq.~(\ref{eq:1bus_marginal2}) also depends on the Jacobian determinant of $f_{\mathrm{imnf}}$, so an accurate point mapping does not by itself guarantee an accurate density estimate. We therefore also compare the model's predicted determinant $\left|\det J_{f_{\mathrm{imnf}}}\right|$, available in closed form via Eq.~(\ref{eq:imnf_det}), against the true conditional PF Jacobian determinant $\left|\det J^{i}_{\mathrm{PF}}\right|$ at bus $i$, computed from the PF simulation using Eq.~(\ref{eq:cond_jacobian}), through a third loss term,
\begin{equation}
    \text{loss}_{jac} =
    \mathrm{MSE}\!\left(
    \left|\det J_{f_{\mathrm{imnf}}}\right|,
    \left|\det J^{i}_{\mathrm{PF}}\right|
    \right).
\end{equation}
The overall training objective is a weighted combination of the three terms, which is expressed as
\begin{equation}
    \text{loss} =
    \omega\, \text{loss}_{w2o}
    + (1-\omega)\, \text{loss}_{o2w}
    + \lambda\, \text{loss}_{jac},
      \omega \in [0,1],\ \lambda \geq 0.
    \label{eq:final_loss}
\end{equation}

\begin{algorithm}[t]
\caption{Training of IMNF with Bidirectional PF Constraints}
\label{alg:imnf_training}
\begin{algorithmic}[1]

\Require
\Statex \hspace{1em} Injected power distribution $\mathcal{P}_{\mathbf{w}}([\mathbf{p}, \mathbf{q}])$
\Statex \hspace{1em} Bidirectional model $f_{\mathrm{imnf}}(\cdot)$
\Statex \hspace{1em} Weights $\omega \in [0,1]$, $\lambda \geq 0$
\Statex \hspace{1em} Distribution system with $N$ buses

\ForAll{minibatches}
    \State Sample $[\mathbf{p}, \mathbf{q}]$ from $\mathcal{P}_{\mathbf{w}}([\mathbf{p}, \mathbf{q}])$.
    \State Sample index $i \in \{1,\ldots,N\}$
    \State Get $(p^i, q^i, \mathbf{s}^{/i}, |v^i|, \theta^i, \left|\det J^{i}_{\mathrm{PF}}\right|)$ via PF simulation

    \State \textbf{(1) Forward transformation}
    \State $(\widehat{|v|}^i,\, \widehat{\theta}^i)
            \gets f_{\mathrm{imnf}}([p^i,q^i],\mathbf{s}^{/i})$
    \State $\text{loss}_{w2o, i} \gets
        \mathrm{MSE}\!\left(
        [\widehat{|v|}^i,\, \widehat{\theta}^i],
        [|v^i|,\, \theta^i]
        \right)$

    \State \textbf{(2) Inverse transformation}
    \State $(\widehat{p}^i,\, \widehat{q}^i)
            \gets f_{\mathrm{imnf}}^{-1}([|v^i|,\theta^i],\mathbf{s}^{/i})$
    \State $\text{loss}_{o2w, i} \gets
        \mathrm{MSE}\!\left(
        [\widehat{p}^i,\, \widehat{q}^i],
        [p^i, q^i]
        \right)$

    \State \textbf{(3) Jacobian consistency}
    \State $\left|\widehat{\det J}^{i}\right| \gets \left|\det J_{f_{\mathrm{imnf}}}\right|$ via Eq.~(\ref{eq:imnf_det}), evaluated along the forward pass of step (1)
    \State $\text{loss}_{jac, i} \gets
        \mathrm{MSE}\!\left(
        \left|\widehat{\det J}^{i}\right|,
        \left|\det J^{i}_{\mathrm{PF}}\right|
        \right)$

    \State \textbf{(4) Total loss}
    \State $\text{loss}_i \gets
        \omega \cdot \text{loss}_{w2o, i}
        + (1 - \omega) \cdot \text{loss}_{o2w, i}
        + \lambda \cdot \text{loss}_{jac, i}$

    \State Update parameters of $f_{\mathrm{imnf}}$ using optimizer (Adam)
\EndFor
\end{algorithmic}
\end{algorithm}

\label{section:surrogatetraining}
\vspace{-0.3cm}
\subsection{Scenario Sampling}\label{sec:lss}
To compute the final density in Eq.~\eqref{eq:1bus_marginal2}, it is necessary to draw a set of scenarios $\{\mathbf{s}_t^{/i}\}_{t=1}^{T}$ from the joint distribution $\mathcal{P}_{\mathbf{w}_{/i}}(\cdot)$. To do this, we adopt LSS, a  scheme that combines quasi--Monte Carlo (QMC) point sets and Latin Hypercube Sampling (LHS). LSS  is particularly effective for high-dimensional integration~\cite{owen1998latin,hajian2012probabilistic}.

Let $\mathbf{s}^{/i}\in\mathbb{R}^{d}$ denote the scenario vector to be sampled, and let $\mathbf{u}\in[0,1]^d$ be its corresponding uniform representation. LSS first partitions the $d$ dimensions into $K$ disjoint groups,
\begin{align}
& \{1,\dots,d\} = G_1 \cup \cdots \cup G_K, \\
& G_a \cap G_b = \emptyset \ (a\neq b),\\
&|G_k|=d_k,\ \sum_{k=1}^{K} d_k = d,
\label{eq:lss_groups}  
\end{align}
and generates, for each group $k$, a low-discrepancy QMC point set $\mathbf{U}^{(k)}=\{\mathbf{u}^{(k)}_t\}_{t=1}^{T}\subset[0,1]^{d_k}$. To introduce Latin-style stratification across groups without destroying the within-group low-discrepancy structure, LSS applies a \emph{single shared permutation} $\pi_k$ of $\{1,\dots,T\}$ to all coordinates in group $k$, i.e.,
\begin{equation}
\tilde{\mathbf{u}}^{(k)}_t = \mathbf{u}^{(k)}_{\pi_k(t)},
\qquad t=1,\dots,T.
\label{eq:lss_permute}
\end{equation}
The full $d$-dimensional sample is then obtained by concatenation,
\begin{equation}
\mathbf{u}_t = \big[\tilde{\mathbf{u}}^{(1)}_t \,\|\, \tilde{\mathbf{u}}^{(2)}_t \,\|\, \cdots \,\|\, \tilde{\mathbf{u}}^{(K)}_t\big]\in[0,1]^d,
\qquad t=1,\dots,T.
\label{eq:lss_concat}
\end{equation}
Finally, $\mathbf{u}_t$ is mapped to the target scenario space via the inverse marginal transforms, yielding $\mathbf{s}^{/i}_t$.

\vspace{-0.2cm}
\section{Experiments}\label{sec:allexp}

\subsection{Experimental Setup for PF Simulation}
In this section, we evaluate the performance of the proposed IMNF model as a PF solver and use MC-based PF simulation as ground truth. We compare our method against several categories of benchmarks: 1) other INN-based architectures that are not specifically designed for PPF, 2) other NN–based PF solvers that do not rely on an invertible structure, and 3) IMNF variants in which $f_{\mathrm{sf}}(\cdot)$ is constructed using either FNNs or GATs, enabling an ablation-style comparison. All experiments are conducted on a 34-bus test system described in~\cite{duque2024tensor}. The evaluation metric is the mean absolute error (MAE), defined as
\begin{equation}
    \text{MAE} 
    = \frac{1}{N_{\text{samples}}}
      \sum_{n=1}^{N_{\text{samples}}} 
      |\mathbf{x}_n - \hat{\mathbf{x}}_n|,
\end{equation}
where $\mathbf{x}_n$ and $\hat{\mathbf{x}}_n$ denote the ground truth and predicted vectors, respectively. We use $N_{\text{samples}} = 10{,}000$ test samples for evaluation. For all models, we adopt the same hyperparameter configuration: a learning rate of $0.001$, a batch size of $64$,  $20{,}000$ training epochs, and $\omega$ in Eq.~(\ref{eq:final_loss}) is set as $0.5$.  The model sizes are kept comparable at approximately $7\times 10^5$ trainable parameters. Other experimental details can be found in our repository. All experiments are trained using the Adam optimizer on the same hardware environment (an NVIDIA A10 GPU), ensuring a fair comparison across methods.

\subsection{Experimental Results of  PF Simulation}

Table~\ref{tab:pfcomparison} summarizes the comparison results of all benchmarks and our proposed IMNF model. Examining the INN-based benchmarks first, we observe that despite similar performance in predicting $[\mathbf{p}, \mathbf{q}]$, both FCPFlow and SplineFlow significantly outperform the other INN models in predicting $[|\mathbf{v}|, \boldsymbol{\theta}]$. This empirical finding provides one of the motivations for selecting $f_{\mathrm{sf}}(\cdot)$ and $f_{\mathrm{sfcp}}(\cdot)$ as the fundamental building blocks of our proposed IMNF model.

Turning to the NN–based PF simulation benchmarks, we note that the physics-guided NN~\cite{hu2020physics} achieves the strongest performance. This model also leverages system topology information and incorporates the invertibility of PF. However, its invertibility is achieved in an indirect manner, in contrast to the explicit invertibility offered by the IMNF framework adopted in our work.

Lastly, our proposed IMNF model shows the best performance experimentally, combining the advantages of both FCPFlow and SplineFlow. With IMNF-GAT, further leveraging the GAT, it outperforms the IMNF-FNN, which is based solely on FNN.

\begin{table}[t]
\centering
\caption{Experimental results of PF simulation}
\label{tab:pfcomparison}
\begin{tabular}{|c|c|c|c|c|}
\hline
Method & MAE $\mathbf{P}$ & MAE $\mathbf{q}$ & MAE $\boldsymbol{\theta}$ & MAE $|\mathbf{v}|$ \\
\hline
\multicolumn{5}{|c|}{\textit{INN Benchmarks}} \\
\hline
Nice~\cite{dinh2014nice} & 0.3396 & 0.2289 & 0.1554 &  0.0758\\
Realnvp~\cite{dinh2016density} & 0.2906 & 0.2595 & 0.1338 & 0.0745\\
Tarflow~\cite{zhai2024normalizing} &  0.3151 & 0.2323 & 0.1597 & 0.0667\\
FCPFlow~\cite{xia2025flow} & 0.3449 & 0.4359 & \underline{0.0459} & 0.2555\\
SplineFlow~\cite{durkan2019neural} & 0.2979 & \underline{0.2236} & 0.1577 & \underline{0.0395} \\
\hline
\multicolumn{5}{|c|}{\textit{NN-based PF Simulation Benchmarks}} \\
\hline
GAECN~\cite{wu2022graph} & 0.4547 & 0.6177 & / & / \\
Physics-Guided NN~\cite{hu2020physics} & \underline{0.2785} & 0.4513 & / & / \\
PowerFlowNet~\cite{lin2024powerflownet} & 0.2886 & 0.5086 & / & / \\
\hline
\multicolumn{5}{|c|}{\textit{Our Proposed Method}} \\
\hline
IMNF-FNN & 0.1780 & 0.1761 & 0.0190 & 0.0335 \\
IMNF-GAT & \textbf{0.1489} & \textbf{0.1619} & \textbf{0.0151} & \textbf{0.0220} \\
\hline
\end{tabular}
\begin{tablenotes}[flushleft]
\footnotesize
\item Note: $x_n$ and $\hat{x}_n$ are normalized before MAE; "/" denotes unavailable outputs due to one-directional prediction of the corresponding benchmarks.
\end{tablenotes}
\vspace{-0.6cm}
\end{table}

\subsection{Experimental Setup for PPF Simulation}
In this section, we evaluate the performance of the proposed framework on density estimation for PPF. The proposed framework is compared against two PPF density-approximation benchmarks. The first benchmark is the linear method~\cite{wang2016analytical}, in which the PF is approximated as a linear mapping of the form $\mathbf{y} = A\mathbf{x} + \mathbf{b}$. The second benchmark is the piecewise linear method~\cite{gao2023analytical}, where the power injection space is segmented into multiple regimes and the PF is locally approximated by linear mappings within each regime, resulting in an overall nonlinear approximation. Following the setup described in Section~\ref{ProblemFormulation}, we assume that the power injection distribution $\mathcal{P}_{\mathbf{w}}(\cdot)$ is known from historical data and is modeled by a GMM. Under this assumption, the marginal and conditional densities $\mathcal{P}_{\mathbf{w}_{/i}}(\cdot)$ and $\mathcal{P}_{\mathbf{w}_{i}}(\cdot)$ appearing in Eq.~(\ref{eq:1bus_marginal2}) admit the following closed-form expressions
\begin{align}
&\mathcal{P}_{\mathbf{w}_{/i}}(\mathbf{s}_t^{/i})
= 
\sum_{k=1}^{K}
    \pi_k \,
    \mathcal{N}\!\left(
        \mathbf{s}_t^{/i}
        \,;
        \boldsymbol{\mu}^{/i}_k,\,
        \boldsymbol{\Sigma}^{/i}_k
    \right), \\[6pt]
&\mathcal{P}_{\mathbf{w}_i}\!\left(
    f_{\mathrm{imnf}}^{-1}\!\left(
        [\lVert v_t^{\,i}\rVert,\theta_t^{\,i}],\mathbf{s}_t^{/i}
    \right)
    \,\middle|\,
    \mathbf{s}_t^{/i}
\right)
\\
&\qquad=
\sum_{k=1}^{K}
    \pi_k(\mathbf{s}_t^{/i})\,
    \mathcal{N}\!\left(
        f_{\mathrm{imnf}}^{-1}\!\left(
            [\lVert v_t^{\,i}\rVert,\theta_t^{\,i}],\mathbf{s}_t^{/i}
        \right)
        \,\middle|\,
        \boldsymbol{\mu}^{i|/i}_k,\,
        \boldsymbol{\Sigma}^{i|/i}_k
    \right), \notag\\
&\pi_k(\mathbf{s}_t^{/i})
=
\frac{
    \pi_k\,
    \mathcal{N}\!\left(
        \mathbf{s}_t^{/i}
        \,\middle|\,
        \boldsymbol{\mu}^{/i}_k,\,
        \boldsymbol{\Sigma}^{/i}_k
    \right)
}{
    \sum_{j=1}^{K}
    \pi_j\,
    \mathcal{N}\!\left(
        \mathbf{s}_t^{/i}
        \,\middle|\,
        \boldsymbol{\mu}^{/i}_j,\,
        \boldsymbol{\Sigma}^{/i}_j
    \right)
}, \label{gmmconditionalweight}
\end{align}
where $\boldsymbol{\mu}^{/i}_k$ and $\boldsymbol{\Sigma}^{/i}_k$ denote the marginal mean and covariance of the $k$-th GMM component with the $i$-th bus removed, $\pi_k$ is the weight of $k$-th component. The conditional mean $\boldsymbol{\mu}^{i|/i}_k$, the conditional covariance $\boldsymbol{\Sigma}^{i|/i}_k$, and the conditional mixture weights $\pi_k(\mathbf{s}_t^{/i})$ follow directly from the standard conditional Gaussian and conditional mixture formulations. 

For evaluation, we first perform MC–based PF simulations with $2{,}500$ samples. For each evaluated bus, a GMM is fitted to the resulting voltage samples and treated as the reference (\textit{ground-truth}) distribution with a closed-form PDF, denoted by $\{\mathcal{P}_{\mathbf{o}_i}(\cdot)\}_{i=1}^{N_s}$. Regarding the evaluation metrics, we quantify the discrepancy between the estimated and reference voltage distributions using the Jensen--Shannon divergence (JSD) and the total variation distance (TVD). Given reference probability distribution $p(\mathbf{x})$ and predicted distribution $q(\mathbf{x})$ defined on $\mathbf{x}\in\mathbb{R}^2$, the TVD is defined as
\begin{equation}
\mathrm{TVD}(p,q)
=
\frac{1}{2}
\int
\left| p(\mathbf{x}) - q(\mathbf{x}) \right| \mathrm{d}\mathbf{x}.
\label{eq:tv_distance}
\end{equation}
The JSD is a symmetric and bounded information-theoretic measure defined as
\begin{equation}
\mathrm{JSD}(p,q)
=
\frac{1}{2}\,\mathrm{KL}\!\left(p \,\middle\|\, m\right)
+
\frac{1}{2}\,\mathrm{KL}\!\left(q \,\middle\|\, m\right), \,
m = \frac{1}{2}(p+q),
\label{eq:js_divergence}
\end{equation}
where $\mathrm{KL}(\cdot\|\cdot)$ denotes the Kullback--Leibler divergence,
\begin{equation}
\mathrm{KL}(p\|q)
=
\int
p(\mathbf{x})
\log\frac{p(\mathbf{x})}{q(\mathbf{x})}
\, \mathrm{d}\mathbf{x}.
\label{eq:kl_divergence}
\end{equation}
Smaller values of JSD and TVD indicate greater similarity between the two distributions.

For the training of the IMNF model, we adopt a circular learning-rate schedule with a maximum learning rate of $5\times10^{-4}$ and a minimum learning rate of $5\times10^{-6}$. The same weighting coefficient $\omega$ in~\eqref{eq:final_loss} is fixed to $0.5$. All experiments are trained using the AdamW optimizer on identical hardware, specifically an NVIDIA A100 GPU. The model sizes are approximately $2 \sim 8\times10^{6}$ parameters based on the system size. We use a batch size of $240$. The power injection distribution $\mathcal{P}_{\mathbf{w}(\cdot)}$ is modeled as a GMMs, initialized using historical data. Here, we do not set a fixed number of training epochs as we aim for the best model. 

\begin{figure*}[t]
    \centering
    \includegraphics[width=1\linewidth]{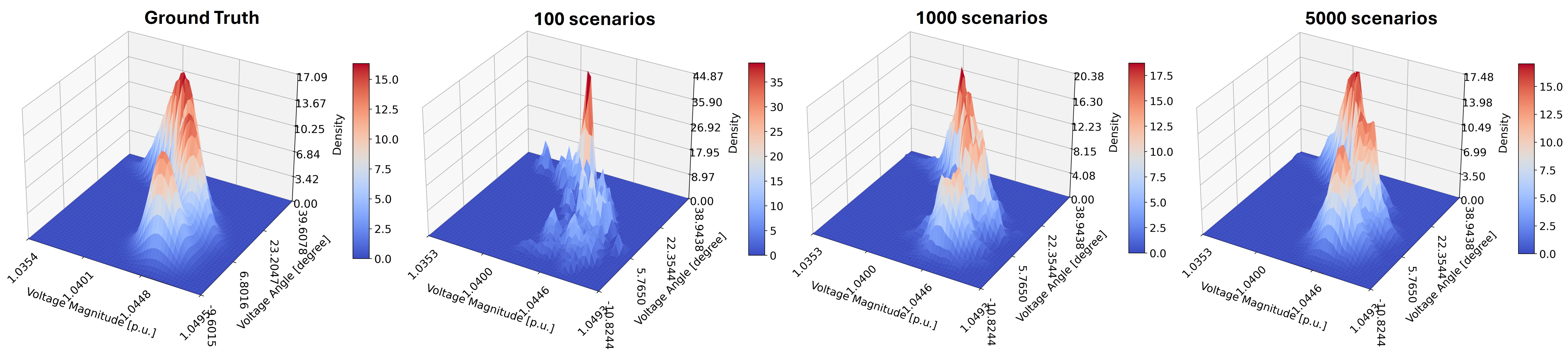}
    \caption{Example of voltage density estimation using the proposed IMNF model for the 39-bus system at bus~1 using LSS. Increasing the number of scenarios yields progressively more accurate density estimates.}
    \label{fig:increasingscenarios}
    \vspace{-0.5cm}
\end{figure*}

\begin{figure*}[htp]
    \centering
    \subfloat[The voltage distribution of $17$-th bus of IEEE-39 bus system.]{
        \includegraphics[width=\linewidth]{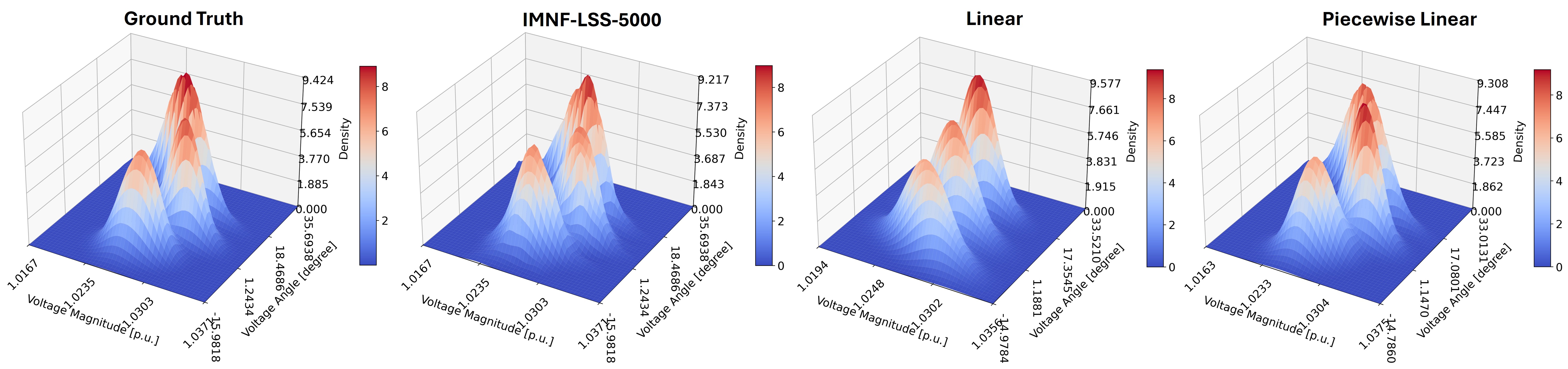}
    }\par\medskip

    \subfloat[The voltage distribution of $2$-th bus of CIGRE-HV system.]{
        \includegraphics[width=\linewidth]{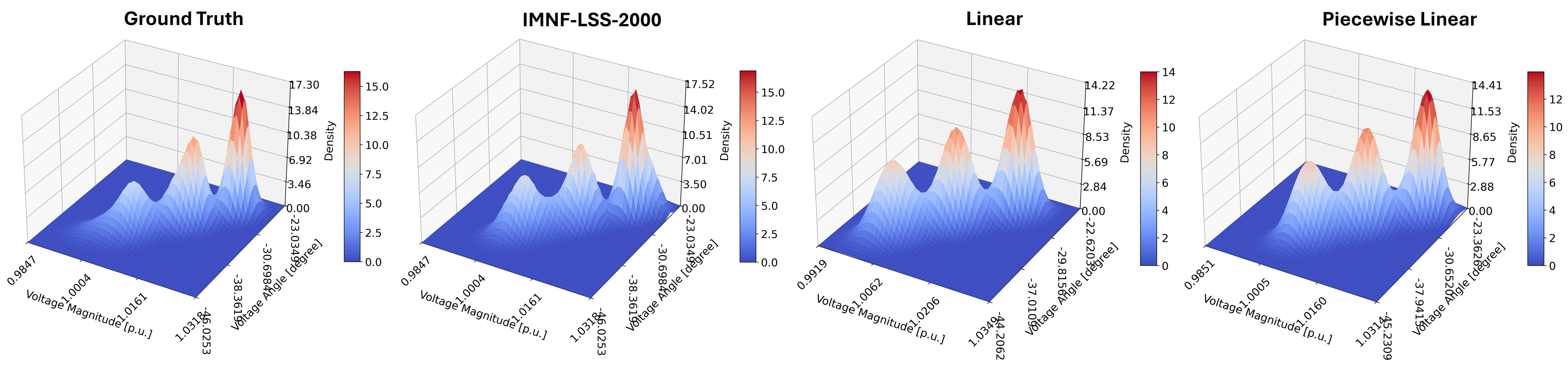}
    }\par\medskip

    \subfloat[The voltage distribution of $41$-th bus of IEEE 69 bus system.]{
        \includegraphics[width=\linewidth]{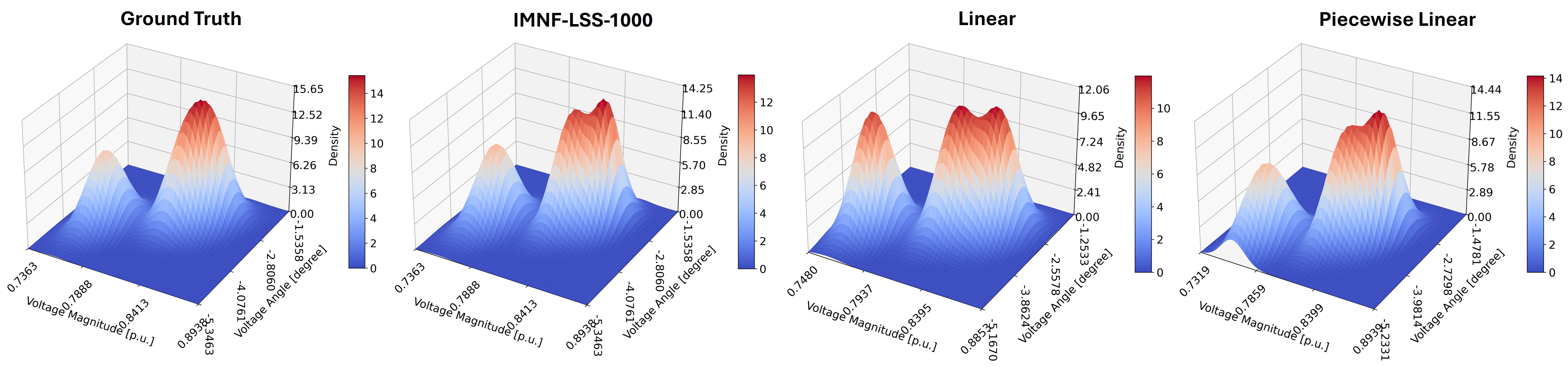}
    }\par\medskip

    \subfloat[The voltage distribution of $7$-th bus of CIGRE-LV system.]{
        \includegraphics[width=\linewidth]{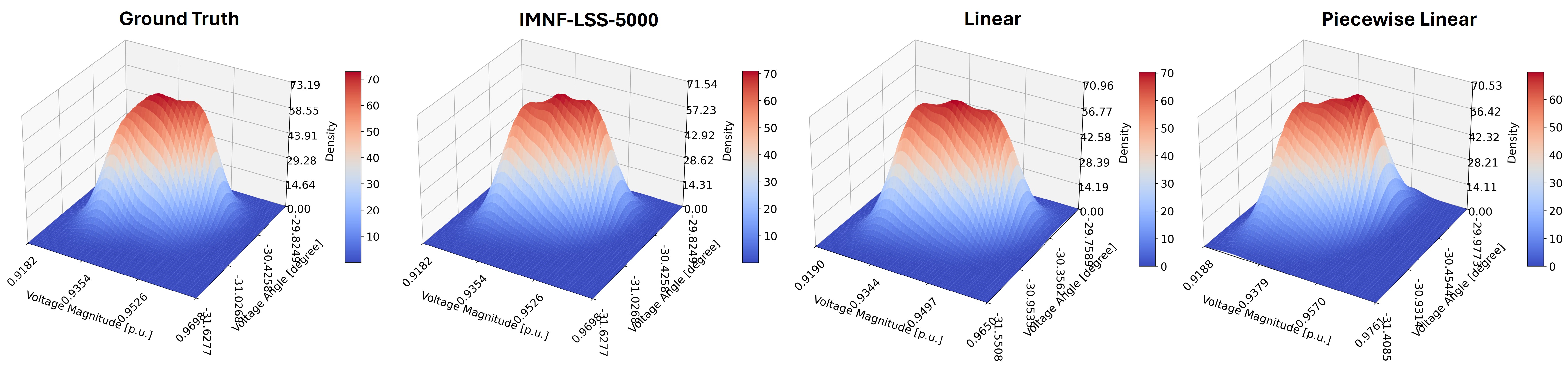}
    }
    \caption{Exemplar density evaluation results from the proposed framework and benchmarks.}
    \label{fig:four_panels}
\end{figure*}

\begin{table}[t]
\centering
\caption{Experimental results of PPF simulation}
\label{table:expppfcomparision}
\begin{threeparttable}
\begin{tabular}{|c|c|c|c|c|c|}
\hline
\textbf{Method} & JSD & TVD &  Method & JSD & TVD \\
\hline
 & \multicolumn{5}{c|}{\textit{Transmission system}} \\
\hline
 & \multicolumn{2}{c|}{\textit{IEEE 39-Bus system}} & -- & \multicolumn{2}{c|}{\textit{CIGRE-HV}~\cite{barsali2014benchmark}} \\
\hline
Linear~\cite{wang2016analytical} & 0.2982 & 0.0964 & Linear & \underline{0.1148} & \textbf{0.0145} \\
PLinear~\cite{gao2023analytical} & 0.1113 & 0.0148 & PLinear & 0.1291 & 0.0194\\
MC-500 & 0.1469 & 0.0271 & MC-10 & 0.2605 & 0.0664 \\
MC-1000 & 0.1130 & 0.0161 & MC-50 & 0.2062 & 0.0337\\
MC-5000  & 0.0674 & 0.0056 & MC-2000 & 0.1217 &	0.0167\\
LSS-500 & 0.1461 & 0.0278 & LSS-10  & 0.3436 & 0.1115\\
LSS-1000 & 0.1093 & 0.0159 & \underline{LSS-50} & 0.1202 & 0.0184 \\
LSS-5000 & \textbf{0.0654} & \textbf{0.0053} & LSS-2000 & \textbf{0.1130} &  \underline{0.0151} \\ 
\hline
 & \multicolumn{5}{c|}{\textit{Distribution system}} \\
\hline
 & \multicolumn{2}{c|}{\textit{IEEE 69-Bus system}} & -- & \multicolumn{2}{c|}{\textit{CIGRE-LV}~\cite{barsali2014benchmark}} \\
\hline
Linear~\cite{wang2016analytical} & 0.0793 & 0.0058 & Linear & 0.0509 &	0.0028\\
PLinear~\cite{gao2023analytical} & 0.3911 & 0.1335 & PLinear &  0.1079	& 0.0113\\
MC-100 & 0.0707 & 0.0041 & MC-100 & 0.0678 & 0.0037\\
MC-500 & 0.0669 &	0.0039 & MC-1000 & 0.0453 & 0.0022\\
MC-1000  & 0.0657 &	0.0039 & MC-5000 &  0.0438 & 0.0021\\
\underline{LSS-100} & 0.0654 & 0.0038 & \underline{LSS-100} & 0.0507 & 0.0025\\
LSS-500 & 0.0627& 0.0036 & LSS-1000 &  0.0418	& 0.0021\\
LSS-1000 & \textbf{0.0601} &\textbf{0.0033}  & LSS-5000 & \textbf{0.0411} & \textbf{0.0021} \\
\hline
\end{tabular}

\begin{tablenotes}[flushleft]
\footnotesize
\item Example: MC-500/LSS-50 denotes IMNF with MC-based/LSS-based sampling using 500 scenarios.
\end{tablenotes}
\end{threeparttable}
\vspace{-0.5cm}
\end{table}

\subsection{Experimental Results of PPF simulation}

First, we examine how the number of sampled scenarios impacts the accuracy of density estimation. Fig.~\ref{fig:increasingscenarios} illustrates how increasing the number of sampled scenarios improves the density estimation performance of the proposed framework under LSS-based approaches. In addition, Table~\ref{table:expppfcomparision} quantitatively shows that the divergence metrics consistently decrease as the number of sampled scenarios increases for both MC-based and LSS-based density estimation.

Regarding transmission systems, different behaviors are observed across the evaluated test cases. Fig.~\ref{fig:four_panels}~(a) and (b) present the density estimation results for exemplar buses. In the IEEE-39 bus system, the proposed framework demonstrates a clear performance advantage over both benchmark approaches, achieving absolute decreases of 0.2328 in JSD and 0.0911 in TVD relative to the Linear benchmark, and absolute decreases of 0.0459 in JSD and 0.0095 in TVD relative to PLinear. For the CIGRE-HV system, the proposed framework shows a noticeable improvement over PLinear, achieving absolute decreases of 0.0161 in JSD and 0.0043 in TVD. Compared with the Linear benchmark, the performance remains comparable, the proposed method yields a decrease of 0.0018 in JSD and a slight increase of 0.0006 in TVD. The underlying reason for these observations lies in the different degrees of nonlinearity in the two systems. In the IEEE-39 bus system, clear nonlinear characteristics are observed in the PF mapping. This is evidenced by the superior performance of the piecewise linear benchmark compared with the purely linear model, indicating that a single global linear approximation is insufficient. Owing to its greater flexibility in modeling complex nonlinear distributions, the proposed framework further outperforms the piecewise linear benchmark and achieves the best overall performance. In contrast, in the CIGRE-HV system, the PF relationship can be well approximated by a linear function of the form $\mathbf{y} = A\mathbf{x} + \mathbf{b}$. In such cases, more expressive models, such as piecewise linear or highly flexible nonlinear methods, are not necessarily advantageous. Although these models offer increased representational capacity, they may introduce higher data requirements, potential overfitting risks, or parameter redundancy when the underlying system behavior is close to linear. These observations suggest that while the proposed framework maintains performance comparable to Linear models in near-linear regimes, its advantages become more pronounced when the underlying system exhibits stronger nonlinear characteristics.

Another conclusion drawn from the comparison between the IEEE-39 bus system and the CIGRE-HV system is that the required number of scenarios depends on both the complexity of the underlying distribution and the scale of the system. CIGRE-HV system is smaller, and the PF is close to linear. The satisfactory performance can be achieved with a relatively small number of scenarios. As shown in the Table~\ref{table:expppfcomparision}, using only 50 scenarios can already yield a performance better than the Plinear benchmark and close to the Linear benchmark. As shown in Table~\ref{table:expppfcomparision}, using only 50 scenarios (LSS-50) already yields better performance than the PLinear benchmark, with absolute decreases of 0.0089 in JSD and 0.0010 in TVD. Meanwhile, the performance remains close to that of the Linear benchmark, with only marginal gaps of 0.0054 in JSD and 0.0039 in TVD.

Regarding the CIGRE-LV and IEEE-69 bus systems, the proposed framework consistently outperforms all benchmark methods, as summarized in Table~\ref{table:expppfcomparision}. Fig.~\ref{fig:four_panels} (c) and (d) illustrate the density estimation results for representative buses. For instance, in the IEEE-69 bus system, compared with the best LSS-based configuration (LSS-1000), the proposed framework achieves absolute decreases of 0.0192 in JSD and 0.0025 in TVD relative to the Linear benchmark, and 0.3310 in JSD and 0.1302 in TVD relative to PLinear. Consistent with previous observations, we found that even under a limited number of scenarios ($N=100$), the proposed framework remains competitive and surpasses conventional baselines. For example, in the CIGRE-LV system with $N=100$, the proposed framework achieves absolute decreases of 0.0002 in JSD and 0.0003 in TVD compared with Linear, and 0.0572 in JSD and 0.0088 in TVD compared with PLinear.

\vspace{-0.2cm}
\section{Conclusion}

In this paper, we propose a novel framework for voltage-density approximation in PPF. We first demonstrate that the proposed IMNF model (embedded within the overall framework) achieves state-of-the-art performance as a standalone PF solver compared with existing benchmark methods. We then show that the complete framework attains state-of-the-art performance in voltage-density approximation for PPF. Furthermore, we investigate how the number of sampled scenarios affects the final approximation accuracy and demonstrate that LSS-based sampling consistently outperforms MC-based approaches in terms of sample efficiency. Despite these promising results, the application of INNs to PPF still warrants further investigation. In particular, we observe that commonly used INN architectures, primarily designed for generative modeling via maximum likelihood, are not fully optimized for PPF tasks. A key challenge lies in the asymmetry of learning difficulty between the forward and inverse mappings, which motivates the use of bidirectional training. We therefore anticipate that future research will further explore both the theoretical foundations and practical design of INN-based frameworks for PF and PPF applications.
\vspace{-0.2cm}

\bibliographystyle{IEEEtran}
\bibliography{ref}
\end{document}